\documentclass[conference]{IEEEtran}
\IEEEoverridecommandlockouts
\usepackage{cite}
\usepackage{amsmath,amssymb,amsfonts}
\usepackage{graphicx}
\usepackage{textcomp}
\usepackage{xcolor}
\def\BibTeX{{\rm B\kern-.05em{\sc i\kern-.025em b}\kern-.08em
    T\kern-.1667em\lower.7ex\hbox{E}\kern-.125emX}}
\usepackage{url}
\usepackage{algorithm}
\usepackage{algpseudocode}
\usepackage[acronym]{glossaries}
\newacronym{deepc}{DeePC}{data-enabled predictive control}
\newacronym{spc}{SPC}{subspace predictive control}
\newacronym{lti}{LTI}{linear time-invariant}
\newacronym{mpc}{MPC}{model predictve control}
\newacronym{mimo}{MIMO}{multiple-input, multiple-output}
\newacronym{siso}{SISO}{single-input, single-output}
\newacronym{pe}{PE}{persistently exciting}
\newacronym{ocp}{OCP}{optimal control problem}
\newacronym{dpc}{DPC}{data-driven predictive control}
\newacronym{pwa}{PWA}{piecewise affine}
\newacronym{licq}{LICQ}{linear independence constraint qualification}

\newtheorem{defn}{Definition}
\newtheorem{lem}{Lemma}
\newtheorem{assum}{Assumption}
\newtheorem{prop}{Proposition}
\begin{document}

\title{Data-driven predictive control of nonlinear systems using weighted regularization\\
\thanks{This work is part of Hollandse Kust Noord wind farm innovation program where CrossWind C.V., Shell, Eneco, Grow and Siemens Gamesa are teaming up; funding for the PhD’s and PostDocs was provided by CrossWind C.V. and Siemens Gamesa.}}

\author{\IEEEauthorblockN{Fritz A. Engeln\IEEEauthorrefmark{1}, Sebastian Zieglmeier\IEEEauthorrefmark{2}, Marta Zagórowska\IEEEauthorrefmark{1} and Jan-Willem van Wingerden\IEEEauthorrefmark{1}}
\IEEEauthorblockA{\IEEEauthorrefmark{1}Delft Center for Systems and Control, Delft University of Technology, Delft, The Netherlands\\ (email: \{f.a.engeln,m.a.zagorowska,j.w.vanwingerden\}@tudelft.nl)}
\IEEEauthorblockA{\IEEEauthorrefmark{2}Department of Technology Systems, University of Oslo, Kjeller, Norway (email: sebastiz@uio.no)}}

\maketitle

\begin{abstract}
Data-driven control methods, like Data-enabled Predictive Control (DeePC), are often formulated for linear systems, where the principle of superposition allows global system behavior to be inferred from locally collected data through \emph{Willems' fundamental lemma}. This principle does not hold for nonlinear systems, whose dynamics may vary across operating regions. 
We propose a data-driven predictive control framework for nonlinear systems that incorporates data column preferences according to their proximity to the current operating point through a weighted norm regularization, thereby localizing the predictor without discarding any data. 
We show how the proposed weighting scheme induces operating point-dependent data prioritization and ensures a well-posed optimization problem.
A numerical study on a nonlinear two-tank system demonstrates that the proposed method matches or outperforms hard data-selection schemes while retaining the full data matrix and its rank, thereby guaranteeing feasibility.

\end{abstract}

\begin{IEEEkeywords}
data-driven~control, nonlinear~systems, weighted~regularization, predictive~control, data~selection
\end{IEEEkeywords}

\begingroup
\renewcommand\thefootnote{}
\footnotetext{\hspace*{-1.1em}The code to reproduce the results of this paper is available at: \url{https://github.com/fengeln/weightDPC}}
\endgroup
\section{Introduction}
Classical predictive control methods such as model predictive control \cite{rawlings2017model} rely on an accurate model of the system, obtained either from first-principles modeling or system identification \cite{ljung1999system}. More recently, increasing attention has been directed toward deriving control policies directly from data, thereby avoiding the need for an explicit parametric model of the system. Instead, \emph{Willems' fundamental lemma} \cite{willems2005note} is used to characterize the behavior of an unknown \gls{lti} system by the linear subspace in which all input-output trajectories reside. Leveraging the \emph{fundamental lemma}, \gls{dpc} schemes can be formulated, such as \gls{deepc}~\cite{coulson2019deepc}. 

Since the \textit{fundamental lemma} presumes a noise-free \gls{lti} system, \gls{deepc} inherits these limitations. Introducing regularization relaxes the strict consistency requirement and lets \gls{deepc} tolerate measurement noise and mild nonlinearities, while also providing closed-loop stability and robustness guarantees~\cite{Berberich_2020}. When the system exhibits nonlinear behavior that cannot be well approximated by a single global linear model, controller performance deteriorates because trajectories recorded in one operating region carry little information about the local behavior in another, causing both prediction accuracy and closed-loop performance to degrade~\cite{zieglmeier2025gain}.

A recent and effective remedy is to localize the data online. Select-\gls{dpc}, introduced in \cite{naf2026choosewiselydatadrivenpredictive}, retains at each time step only the columns of the data matrix that are closest to the current operating point according to either a norm- or a manifold-based metric. This approach implicitly linearizes the dynamics in the trajectory space while preserving the convexity of the \gls{ocp} and allowing the standard \gls{deepc} formulation to be reused. A closely related approach was independently proposed in~\cite{beerwerth2025less}.
Both methods show that localizing data online is an effective way to handle nonlinear dynamics, yet rely on a hard, discrete selection in which each column is either kept or discarded. This entails two drawbacks. First, individually informative columns may be discarded~\cite{li2025datamodel}, while the retained columns can become highly similar, resulting in Hankel matrices that span a poorly conditioned subspace. Second, because the retained data changes from one time step to the next, standard theoretical guarantees of \gls{deepc}, such as recursive feasibility, do not directly carry over.

We address the limitations of hard data selection by retaining all data columns and assigning each a weight based on its distance to the current operating point. These weights are incorporated through a positive-definite weighting matrix in the regularization of the column selector, such that nearby trajectories contribute more strongly to the prediction, while distant trajectories are penalized. We position our method, weight-\gls{dpc}, between standard \gls{deepc} and select-\gls{dpc}. \gls{deepc} weights every column equally, while select-\gls{dpc} keeps a hard subset and discards the rest. In contrast to the weighted regularization approach proposed in \cite{li2026rdsdeepcrobustdataselection}, we define the weighting matrix explicitly and provide a theoretical analysis of the resulting algorithm. Unlike our distance-based weighting, exponentially decaying time-dependent weights are proposed in \cite{10885954} as a forgetting factor for adaptive control.

The contributions of this paper are threefold: (i) We propose weight-\gls{dpc}, where the usage of each column of the data matrix is penalized with a weight that reflects the distance of the data to the current operating point of the system.
(ii) We establish well-posedness (recursive feasibility and uniqueness) for $\ell_2$- and weighted $\ell_2$-norm regularization, showing that a non-uniform weighting enables prioritization of locally relevant data.
(iii) We validate the theoretical results on a nonlinear two-tank system by empirically analyzing the column selector $g$, highlighting the differences between \gls{deepc}, select-\gls{dpc}, and weight-\gls{dpc}.

The remainder of the paper is organized as follows. Section~\ref{sec:Preliminaries} introduces the model class and the preliminaries on \gls{dpc}. Section~\ref{sec:NLDPC} develops the weight-\gls{dpc} formulation and Section~\ref{sec:theoretical} its theoretical framework. Section~\ref{sec:numericalEx} presents the numerical study, and Section~\ref{sec:Conclusion} concludes.

\textit{Notation:} The identity matrix is denoted by \(I\) and the Kronecker product by $\otimes$. For a symmetric matrix \(X\), the notation \(X \succ 0\) indicates that \(X\) is positive definite. Any matrix \(X \succ 0\) induces the weighted Euclidean norm \(\|v\|_X=\sqrt{v^\top Xv}\) for all vectors $v$ of appropriate dimension. The standard Euclidean norm (\(\ell_2\)-norm) is denoted by \(\|v\|=\sqrt{v^\top v}\) which corresponds to the special case \(X = I\).
\section{Preliminaries}
\label{sec:Preliminaries}
We consider the time-invariant nonlinear state-space model
\begin{equation}
\begin{aligned}
\label{eqn:nonlinearSS}
    x_{k+1} &= \phi(x_k,u_k)\\
    y_k &= \psi(x_k, u_k),
\end{aligned}
\end{equation}
where \(x_k \in \mathbb{R}^{n}\), \(u_k\in \mathbb{R}^{n_u}\), and \(y_k\in \mathbb{R}^{n_y}\) are the state, input, and output vector at time index \(k \in \mathbb{Z}_{>0}\), respectively. The state transition function \(\phi: \mathbb{R}^{n} \times \mathbb{R}^{n_u} \rightarrow \mathbb{R}^{n}\) and the output function \(\psi: \mathbb{R}^{n} \times \mathbb{R}^{n_u} \rightarrow \mathbb{R}^{n_y}\) are in general nonlinear functions.\\
%
%
\subsection{The fundamental lemma}
We denote a stacked vector of \(T\) consecutive input samples starting at time \(\hat{i}\) by \(\underline{u}_{\hat{i},T}\in \mathbb{R}^{{n_u}T}\) and a block-Hankel matrix of depth \(L\) constructed from the samples in \(\underline{u}_{\hat{i},T}\) by
\begin{align}
    U_L &= \begin{bmatrix}u_{\hat{i}}&\cdots&u_{\hat{i}+T-L}\\\vdots&\ddots&\vdots\\u_{\hat{i}+L-1}&\cdots&u_{\hat{i}+T-1}\end{bmatrix}\in \mathbb{R}^{n_uL \times n_g}. 
\end{align}
Similarly, the matrix \(Y_L\in \mathbb{R}^{n_yL \times n_g}\) is constructed from \(\underline{y}_{\hat{i},T}\) with $n_g =T-L+1$.
\begin{defn}
    A sequence \(\underline{u}_{\hat{i},T}\) is said to be \gls{pe} of order \(L\) if \(\mathrm{rank}(U_L)=n_uL\).
\end{defn}
\begin{lem}[Willems' fundamental lemma, \cite{willems2005note}]
\label{lem:Willems}
    Consider recorded input-output trajectories \(\underline{u}_{\hat{i},T}\) and \(\underline{y}_{\hat{i},T}\) of an \(n^{th}\)-order controllable \gls{lti} system, where the input data \(\underline{u}_{\hat{i},T}\) is PE of order \(L+n\) and suppose \(U_L\) and \(Y_L\) to be constructed from that data. Then, the trajectories \(\underline{u}_{i,L}\) and \(\underline{y}_{i,L}\), starting at any time \(i \in\mathbb{Z}_{>0}\) belong to the system if and only if there exists a \(g\in \mathbb{R}^{n_g}\) such that
    \begin{align}
        \label{eqn:implicitPredictor}
        \begin{bmatrix}U_L\\Y_L\end{bmatrix}g&=\begin{bmatrix}\underline{u}_{i,L}\\ \underline{y}_{i,L}\end{bmatrix}.
    \end{align}
\end{lem}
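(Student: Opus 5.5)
We prove the two implications separately, working in a minimal state-space realization $x_{k+1}=Ax_k+Bu_k$, $y_k=Cx_k+Du_k$ of the controllable \gls{lti} system. We also use the standard assumption, implicit in the statement, that this realization is observable.

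\emph{Sufficiency ($\Leftarrow$).} This direction is elementary and uses only linearity. Each column of $\begin{bmatrix}U_L^\top & Y_L^\top\end{bmatrix}^\top$ is a length-$L$ input-output trajectory of the system, generated from some initial state $x_j$ of the recorded data. The set of length-$L$ trajectories of an \gls{lti} system is a linear subspace. Concretely, any such trajectory satisfies $\underline{y}=\mathcal{O}_L x_0+\mathcal{T}_L\underline{u}$, where $\mathcal{O}_L$ is the extended observability matrix and $\mathcal{T}_L$ the block-Toeplitz matrix of Markov parameters. A linear combination with coefficients $g$ is therefore again a trajectory, with initial state $\sum_j g_j x_j$. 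Hence any pair $(\underline{u}_{i,L},\underline{y}_{i,L})$ satisfying \eqref{eqn:implicitPredictor} belongs to the system.

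\emph{Necessity ($\Rightarrow$).} Let $X=[x_{\hat i},\dots,x_{\hat i+T-L}]$ collect the data states. Then
\begin{equation*}
\begin{bmatrix}U_L\\Y_L\end{bmatrix}=\begin{bmatrix}0 & I\\ \mathcal{O}_L & \mathcal{T}_L\end{bmatrix}\begin{bmatrix}X\\U_L\end{bmatrix}.
\end{equation*}
By the same parametrization, every trajectory $(\underline{u}_{i,L},\underline{y}_{i,L})$ lies in the image of the left factor, applied to the vector $\begin{bmatrix}x_i^\top & \underline{u}_{i,L}^\top\end{bmatrix}^\top$. It therefore suffices to prove the rank condition
\begin{equation*}
\mathrm{rank}\begin{bmatrix}X\\U_L\end{bmatrix}=n+n_uL .
\end{equation*}
With this in hand, there is a $g$ with $Xg=x_i$ and $U_Lg=\underline{u}_{i,L}$, and then \eqref{eqn:implicitPredictor} holds.

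\emph{Main step: the rank condition.} This is where persistency of excitation of order $L+n$ and controllability enter, and it is the main obstacle. We argue by contradiction. Suppose a nonzero row vector $[\xi\ \eta]$, with $\xi\in\mathbb{R}^{1\times n}$ and $\eta\in\mathbb{R}^{1\times n_uL}$, annihilates $\begin{bmatrix}X\\U_L\end{bmatrix}$.

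We would use the deeper Hankel matrix $U_{L+n}$ and the corresponding state matrix, and shift the annihilator in time. Propagating the state with $x_{k+1}=Ax_k+Bu_k$ gives $n+1$ vectors annihilating $\begin{bmatrix}X'\\U_{L+n}\end{bmatrix}$, of the form $[\xi A^k,\ \text{(shifted and padded }\eta\text{ plus terms }\xi A^{j}B)]$ for $k=0,\dots,n$.

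By Cayley--Hamilton, a suitable linear combination with the coefficients of the characteristic polynomial of $A$ cancels the state part. This leaves a row vector $w$ with $wU_{L+n}=0$. Persistency of excitation of order $L+n$ forces $w=0$.

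We then show $w=0$ implies $\eta=0$, by examining the last nonzero block of $\eta$: the padding places it in a position that only $\eta$ contributes to, because the characteristic polynomial is monic. With $\eta=0$, the remaining relations reduce to $\xi A^jB=0$ for all $j$. Controllability then gives $\xi=0$, which is the desired contradiction.

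The delicate part is the index bookkeeping in this shifting and cancellation argument. The rest follows directly.
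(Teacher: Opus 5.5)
The paper does not prove this lemma. It is quoted from \cite{willems2005note} and used as a black box, so there is no in-paper argument to compare your proposal against.

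Your proof is correct, and it is the now-standard state-space proof (as in, e.g., van Waarde et al., 2020). Sufficiency follows from linearity of the set of length-$L$ trajectories. Necessity follows from the factorization through $\begin{bmatrix}X\\U_L\end{bmatrix}$ together with the rank condition $\operatorname{rank}\begin{bmatrix}X\\U_L\end{bmatrix}=n+n_uL$.

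The bookkeeping you flag as delicate does go through:
\begin{itemize}
\item Write $\chi_A(s)=\sum_{j=0}^{n}\alpha_j s^j$ with $\alpha_n=1$, and $v_j=[\,\xi A^j,\ \xi A^{j-1}B,\dots,\xi B,\ \eta,\ 0\,]$.
\item All $\xi A^{j}B$-terms sit in the first $n$ input blocks.
\item Let $\eta_k$ be the last nonzero block of $\eta$. In input block $n+k$, the combination $\sum_j\alpha_j v_j$ equals $\alpha_n\eta_k=\eta_k$. Every other $\eta$-contribution there has index larger than $k$ and therefore vanishes.
\item Once $\eta=0$, back-substitution through blocks $n-1,\dots,0$ gives $\xi B=\xi AB=\dots=\xi A^{n-1}B=0$, and controllability forces $\xi=0$.
\end{itemize}

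One minor remark: observability is never used. The image characterization needs only $(A,B)$ controllable, and a minimal realization is observable automatically anyway. Observability, together with $L$ at least the observability index, matters only for the paper's follow-up claim that the data matrix has $n_uL+n$ independent columns. That is because $\operatorname{rank}\begin{bmatrix}U_L\\Y_L\end{bmatrix}=n_uL+\operatorname{rank}\mathcal{O}_L$.
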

Lemma~\ref{lem:Willems} is a key result in behavioral systems theory and implies that the data matrix has \(n_uL+n\) linearly independent columns that span the trajectory space of the system. This result has implications for system identification and control synthesis, as it allows system properties to be inferred directly from data, bypassing the need for explicit model identification.
However, Lemma~\ref{lem:Willems} is limited to noise-free controllable \gls{lti} systems. In contrast to nonlinear systems, linear systems follow the principle of superposition and any input-output trajectory can be constructed from linear combinations of columns of the data matrix.
\subsection{Data-driven predictive control}
The \textit{fundamental lemma} can be leveraged to design a \gls{dpc} scheme \cite{7244723,coulson2019deepc}. Given past and future time horizons \(p\) and \(f\), with \(L=p+f\), we partition the data in~\eqref{eqn:implicitPredictor}, such that
\begin{align}
    U_L &= \begin{bmatrix} U_p \\ U_f \end{bmatrix} \quad\text{and}\quad  Y_L = \begin{bmatrix} Y_p \\ Y_f \end{bmatrix}. 
\end{align}
Furthermore, we have the past input data \(\underline{u}_{i,p}\), past output data \(\underline{y}_{i,p}\), a reference trajectory \(\underline{r}_{i+p,f}\), an input constraint set \(\mathcal{U} \subseteq \mathbb{R}^{n_u}\), an output constraint set \(\mathcal{Y} \subseteq \mathbb{R}^{n_y}\) and formulate the following \gls{ocp}:
\begin{subequations}
\label{eqn:dpc}
    \begin{align}
        &\min_{g,\sigma,\underline{u}_{i+p,f},\underline{y}_{i+p,f}}\!\!\!\!J(\underline{u}_{i+p,f},\underline{y}_{i+p,f})\!  +\! \lambda_g\|g\|^2 \!+\!\lambda_\sigma\|\sigma\|^2\\
        \label{eqn:DeePCConstraint}
        \mathrm{s.t.} &\begin{bmatrix}U_p\\U_f\\Y_p\\Y_f\end{bmatrix}g=\begin{bmatrix}\underline{u}_{i,p}\\\underline{u}_{i+p,f}\\\underline{y}_{i,p}+\sigma\\\underline{y}_{i+p,f}\end{bmatrix}\\
        \label{eqn:IOconstraints}
        &(u_k, y_k) \in \mathcal{U} \times \mathcal{Y}, \; \forall k \in \{i\!+\!p, ...,i\!+\!p\!+\!f\!-\!1 \},
    \end{align}
\end{subequations}
where the scalars $\lambda_\sigma$ and $\lambda_g$ are weights on the penalization of the slack vector $\sigma \in \mathbb{R}^{n_yp}$ and the column selector $g$, respectively. The objective function is given by
\begin{align}
\label{eqn:Objective}
    J(\underline{u}_{i+p,f},\underline{y}_{i+p,f}) &= \sum_{k=i+p}^{i+p+f-1}\underbrace{\|y_k-r_k\|_Q^2 + \|u_k\|_R^2}_{\ell(u_k,y_k)},
\end{align}
with the performance matrices \(Q,R \succ 0 \) and the stage cost $\ell(u_k,y_k)$.\\
A \gls{dpc} scheme is obtained by solving the \gls{ocp} in~\eqref{eqn:dpc} in a receding horizon fashion and applying the first input of the optimal input trajectory to the system. At each iteration, the past input and output trajectories are updated with the past \(p\) input-output measurements, which are assumed to be known to implicitly initialize the system. For deterministic systems, the minimum past horizon length corresponds to the lag of the system \cite{Markovsky01122008}. Additionally, the input and output constraints in~\eqref{eqn:IOconstraints} can be chosen to enforce the system to operate within physical and safety limits.\\
Considering solely the data equality constraint in~\eqref{eqn:DeePCConstraint} while neglecting the relaxation $\sigma$, the data matrix on the left-hand side has full row rank if the output data is corrupted by noise and the columns span the linear space of all trajectories of length \(L\). In other words, the solutions to the \gls{ocp} are not restricted to the systems dynamics and the constraint~\eqref{eqn:DeePCConstraint} is meaningless. To restore the meaning of the constraint, the column selector \(g\) is regularized. Both, the $\ell_1$-norm and $\ell_2$-norm are commonly used for regularization, while recent results have shown that the $\ell_2$-norm regularization is favorable \cite{11020753}. The next section introduces a weighted regularization that prioritizes locally relevant data for nonlinear systems with operating point-dependent behavior.
\section{Data-driven predictive control for nonlinear systems}
\label{sec:NLDPC}
For nonlinear systems such as~\eqref{eqn:nonlinearSS}, the dynamics may vary across the state space. We therefore seek a trajectory-based measure that quantifies the similarity between the current operating point and the operating points represented by the columns of the data matrix in~\eqref{eqn:implicitPredictor}.
\subsection{Distance between trajectories}
In the behavioral framework, the system state is not explicitly available. Hence, quantifying the distance between operating points must be based on input-output trajectories. At time $i+p$ in~\eqref{eqn:dpc}, the past $p$ input-output measurements are available. For a controllable \gls{lti} system \cite{ljung1999system}, the state can be expressed as
\begin{align}
    x_{i+p} &= \mathcal{K}\begin{bmatrix}\underline{u}_{i,p}\\\underline{y}_{i,p}\end{bmatrix} = \mathcal{K}\tau_{i,p},    
\end{align}
where $\mathcal{K}$ denotes the extended controllability matrix \cite{ENGELN2026100384}, and $\tau_{i,p}$ is a vector with the past $p$ input-output samples.\\
Let $\tau_{v,p}$ denote a vector containing the first $p$ input and output samples of the column of the data matrix in~\eqref{eqn:DeePCConstraint} starting at time index \(v\), such that
\begin{align}
    \tau_{v,p} &= \begin{bmatrix}\underline{u}_{v,p}\\ \underline{y}_{v,p}\end{bmatrix}, \quad \forall v \in \{\hat{i}, ...,\hat{i}+n_g-1\}.
\end{align}
The distance between the current state of an \gls{lti} system and the state associated with the $v^{th}$-column of the data matrix after $p$ time steps can be written as
\begin{align}
\label{eqn:distanceLTI}
    \|x_{v+p} - x_{i+p}\| &=  \|\mathcal{K}(\tau_{v,p} - \tau_{i,p})\| .
\end{align}
Although~\eqref{eqn:distanceLTI} is only valid for linear systems and $\mathcal{K}$ is generally unknown, it motivates measuring similarity directly in trajectory space. We therefore define
\begin{align}
\label{eqn:distancew}
    w_v &= \|\tau_{v,p} - \tau_{i,p}\|, \quad \forall v \in \{\hat{i}, ...,\hat{i}+n_g-1\},
\end{align}
where small values of $w_v$ indicate operating conditions of the $v^{th}$ column similar to the current one.\\
Unlike~\eqref{eqn:distancew}, the Euclidean distance metric in \cite{naf2026choosewiselydatadrivenpredictive} uses complete trajectories of length $L$ and therefore requires an approximation of the unknown future trajectory from the previous \gls{ocp} solution.
\subsection{Data preferences}
Assume that input-output data of the nonlinear system in~\eqref{eqn:nonlinearSS} have been collected from multiple operating regions. As a result, the data matrix captures a broad range of local system behaviors. We assign a weight to each column of the data matrix based on its distance in \eqref{eqn:distancew} to the current operating point. Let the vector
\begin{align}
    \underline{w}_{\hat{i},n_g} &= \begin{bmatrix}w_{\hat{i}}& \cdots & w_{\hat{i}+n_g-1}\end{bmatrix}
\end{align}
contain the distances of each column of the data matrix to the current operating point as defined in~\eqref{eqn:distancew}. We define the normalized distances as
\begin{align}
\label{eqn:weights}
    \tilde{w}_v &= \frac{w_v}{\|\underline{w}_{\hat{i},n_g}\|} , \quad \forall v \in \{\hat{i}, ...,\hat{i}+n_g-1\},
\end{align}
such that 
\begin{align}
    \underline{\tilde{w}}_{\hat{i},n_g} &= \begin{bmatrix}\tilde{w}_{\hat{i}}& \cdots & \tilde{w}_{\hat{i}+n_g-1}\end{bmatrix}
\end{align}
and \(\|\underline{\tilde{w}}_{\hat{i},n_g}\|=1\). Next, we define the diagonal matrix
\begin{align}
\label{eqn:weight_matrix}
    \Lambda &= \lambda_w\begin{bmatrix}\tilde{w}_{\hat{i}}&0&\cdots&0\\0&\tilde{w}_{\hat{i}+1}&\ddots&0 \\ \vdots & \ddots & \ddots &\vdots\\0&0& \cdots &\tilde{w}_{\hat{i}+n_g-1}  \end{bmatrix} + \lambda_g I
\end{align}
with the scalar weight parameters $\lambda_w\geq0$ and $\lambda_g>0$. Since $\Lambda$ is a diagonal matrix, its eigenvalues are given by $\lambda_w\tilde{w}_v + \lambda_g$. Therefore, the matrix $\Lambda$ is positive definite and induces a \(2\)-norm. With this, we introduce the weight-\gls{dpc} \gls{ocp}:
\begin{subequations}
\label{eqn:OCPNLDPC}
    \begin{align}
        &\min_{g,\sigma,\underline{u}_{i+p,f},\underline{y}_{i+p,f}}\!\!J(\underline{u}_{i+p,f},\underline{y}_{i+p,f})\!+\! \|g\|_{\Lambda}^2 \!+\! \lambda_\sigma\|\sigma\|^2
        \label{eq:cost_func}
        \\
        \mathrm{s.t.} &\begin{bmatrix}U_p\\U_f\\Y_p\\Y_f\end{bmatrix}g=\begin{bmatrix}\underline{u}_{i,p}\\\underline{u}_{i+p,f}\\\underline{y}_{i,p}+\sigma\\\underline{y}_{i+p,f}\end{bmatrix}
        \label{eq:data_constraints}
        \\
        &(u_k, y_k) \in \mathcal{U} \times \mathcal{Y}, \forall k \in \{i\!+\!p, ...,i\!+\!p\!+\!f\!-\!1 \},
        \label{eq:constraintset}
    \end{align}
\end{subequations}
where \(\Lambda\) is weighting the entries of \(g\) according to their distance of the current operation point. Columns corresponding to nearby operating conditions are penalized less and are therefore favored in the construction of the prediction. The \gls{ocp} is solved in a receding horizon fashion and the matrix \(\Lambda\) is calculated at every time step. The weight-\gls{dpc} algorithm is summarized in Algorithm~\ref{alg:nonlinear_dpc}.
\begin{algorithm}
\caption{Weight-\gls{dpc}}
\label{alg:nonlinear_dpc}
\begin{algorithmic}[1]
\State \textbf{Given:} \(\underline{u}_{i,p}\), and \(\underline{y}_{i,p}\)
\While{control task not finished}
    \State Calculate the matrix \(\Lambda\) with~\eqref{eqn:weight_matrix}
    \State Solve the \gls{ocp} in~\eqref{eqn:OCPNLDPC} for \(\underline{u}_{i+p,f}^\ast\) and \(\underline{y}_{i+p,f}^\ast\)
    \State Apply the first control input $u_{i+p} = u_{i+p}^\ast$
    \State Set $i \gets i + 1$
    \State Update \(\underline{u}_{i,p}\) and \(\underline{y}_{i,p}\) with the most recent values
\EndWhile
\end{algorithmic}
\end{algorithm}
\section{Optimal column selector}
\label{sec:theoretical}
To analyze the effect of the weighted regularization term in~\eqref{eq:cost_func} on the utilization of the data matrix columns, we derive an explicit solution to the weight-\gls{dpc} optimization problem in~\eqref{eqn:OCPNLDPC}. The resulting column selector is then compared with the standard \gls{deepc} formulation employing $\ell_2$-norm regularization as given in~\eqref{eqn:dpc}. Throughout the analysis, the following assumptions are imposed:
\begin{assum}
\label{ass:Constraints}
    The input and output constraint sets are polyhedral
    \begin{equation}
    \begin{aligned}
        \mathcal{U} &= \{u\in\mathbb{R}^{n_u}:A_u u \leq b_u\}\quad\\
        \mathcal{Y} &= \{y\in\mathbb{R}^{n_y}:A_y y\leq b_y\}
    \end{aligned}
    \end{equation}
    for suitable matrices $A_u$, $A_y$ and vectors $b_u$, $b_y$.
\end{assum}
\begin{assum}
\label{ass:inputdata}
    The input sequence used for data collection is \gls{pe} of order $L$, such that the Hankel matrix $U_L$ has full row rank.
\end{assum}
\begin{assum}
\label{ass:rankDataMatrix}
    The output Hankel matrix $Y_L$ has full row rank, and its rows are linearly independent of the rows of $U_L$.
\end{assum}
Under Assumption~\ref{ass:Constraints}, the optimization problem in~\eqref{eqn:OCPNLDPC} can be reformulated solely in terms of the column selector vector $g$
\begin{subequations}
\label{eqn:OCPing}
\begin{align}
\label{eqn:OCPingObjec}
    \min_g &\; J(U_fg,Y_fg) + \|g\|_\Lambda^2 + \lambda_\sigma\|Y_pg-\underline{y}_{i,p}\|^2\\
    \mathrm{s.t\;}& U_pg = \underline{u}_{i,p}\\
    \label{eqn:constraintsOCPing}
    &Ag \leq b,
\end{align}
\end{subequations}
where $A$ and $b$ represent the polyhedral input-output constraints induced by Assumption~\ref{ass:Constraints}. Furthermore, Assumptions~\ref{ass:inputdata} and~\ref{ass:rankDataMatrix} imply that the combined data matrix $\begin{bmatrix}U_L^\top&Y_L^\top\end{bmatrix}^\top$ has full row rank.\\
Before deriving an explicit expression for the solution of~\eqref{eqn:OCPing}, we establish that the optimization problem is well posed. In particular, we show that the problem is recursively feasible and admits a unique minimizer.
\subsection{Well-posedness}
\begin{defn}[Recursive feasibility, \cite{rawlings2017model}]
    An \gls{ocp} is said to be recursively feasible if feasibility at time $k$ implies feasibility at time $k+1$ for all $k\in\mathbb{N}$.
\end{defn}
\begin{lem}
\label{lem:wellposedness}
Let the input and output constraint sets be nonempty, i.e., $\mathcal{U}\neq\emptyset$ and $\mathcal{Y}\neq\emptyset$. Then the \gls{ocp} in~\eqref{eqn:OCPing} is recursively feasible. Furthermore, if $Q,R\succ0$ and the weight matrix $\Lambda$ is given by~\eqref{eqn:weight_matrix}, the optimization problem admits a unique minimizer.
\end{lem}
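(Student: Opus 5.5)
We treat feasibility and uniqueness separately, working throughout with the reformulation~\eqref{eqn:OCPing} in the single decision variable $g$.

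\emph{Feasibility at every time step.} By Assumptions~\ref{ass:inputdata} and~\ref{ass:rankDataMatrix}, the stacked matrix $\begin{bmatrix}U_L^\top & Y_L^\top\end{bmatrix}^\top$ has full row rank. Since $\mathcal{U}\neq\emptyset$ and $\mathcal{Y}\neq\emptyset$, we can pick any $\bar u\in\mathcal{U}$ and $\bar y\in\mathcal{Y}$. Set $\underline{u}_{i+p,f}=\mathbf{1}_f\otimes\bar u$ and $\underline{y}_{i+p,f}=\mathbf{1}_f\otimes\bar y$, so that~\eqref{eq:constraintset} holds. Full row rank guarantees that the linear system
\begin{align*}
\begin{bmatrix}U_p\\U_f\\Y_p\\Y_f\end{bmatrix}g=\begin{bmatrix}\underline{u}_{i,p}\\\underline{u}_{i+p,f}\\\underline{y}_{i,p}\\\underline{y}_{i+p,f}\end{bmatrix}
\end{align*}
has a solution $g$, for instance via the Moore--Penrose pseudoinverse. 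This $g$ together with $\sigma=0$ is feasible for~\eqref{eqn:OCPNLDPC}, and hence for~\eqref{eqn:OCPing}. The slack $\sigma$ is unconstrained, so $Y_pg$ need not match $\underline{y}_{i,p}$ in general; here it does so trivially.

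The argument uses no information about the previous solution. The only data that change between steps are $\underline{u}_{i,p}$, $\underline{y}_{i,p}$ and $\Lambda$. Of these, $\Lambda$ enters only the cost, and the construction works for arbitrary past data. Feasibility therefore holds at every $k$, and recursive feasibility follows directly. The one point to check is that the polyhedral constraint $Ag\le b$ in~\eqref{eqn:constraintsOCPing} encodes exactly $(U_fg)_k\in\mathcal{U}$ and $(Y_fg)_k\in\mathcal{Y}$, which holds by construction.

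\emph{Uniqueness.} The cost in~\eqref{eqn:OCPingObjec} is a quadratic in $g$ with Hessian
\begin{align*}
2\left(U_f^\top (I_f\otimes R)U_f + Y_f^\top (I_f\otimes Q)Y_f + \Lambda + \lambda_\sigma Y_p^\top Y_p\right).
\end{align*}
The first, second and fourth terms are positive semidefinite because $Q,R\succ0$ and $\lambda_\sigma\ge0$. The term $\Lambda$ is positive definite, since its diagonal entries satisfy $\lambda_w\tilde w_v+\lambda_g\ge\lambda_g>0$ (using $\tilde w_v\ge0$ and $\lambda_w\ge0$). The Hessian is therefore positive definite and the objective is strongly convex. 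The feasible set is an intersection of an affine subspace and a polyhedron, so it is closed and convex, and it is nonempty by the first part. A strongly convex continuous function on a nonempty closed convex set is coercive, so a minimizer exists, and strict convexity makes it unique.

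If one adopts the convention $\tilde w=0$ when $\underline{w}_{\hat i,n_g}=0$, so that the normalization is well defined, the bound $\Lambda\succeq\lambda_g I$ is unaffected.

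\emph{Main obstacle.} The main subtlety is the feasibility step: the equality constraint must be solvable simultaneously with the polyhedral constraints. This is exactly where the full row rank from Assumptions~\ref{ass:inputdata} and~\ref{ass:rankDataMatrix} is needed. We should also make sure that "recursive" feasibility is understood as holding for arbitrary measured $\underline{y}_{i,p}$, which is the case here.
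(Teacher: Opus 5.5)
Your proof is correct and uses the same two ingredients as the paper's: full row rank of $\begin{bmatrix}U_L^\top & Y_L^\top\end{bmatrix}^\top$, so that any future trajectory in $\mathcal{U}\times\mathcal{Y}$ is matched by some $g$, and strict convexity from $\Lambda\succeq\lambda_g I\succ 0$. The paper phrases feasibility through the usual shifted-and-appended candidate built from the previous optimum, which its full-rank argument never actually uses, so your statement that feasibility holds unconditionally at every step is cleaner; your coercivity step also supplies existence of a minimizer, which the paper's strict-convexity argument leaves implicit since that argument by itself only gives at most one minimizer.
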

\begin{IEEEproof}
    Assume that at time $i+p$ there exists a feasible solution with optimal input and output trajectories
    \begin{equation}
    \begin{aligned}
        \underline{u}^\ast_{i+p,f} &= \begin{bmatrix}u^\ast_{i+p}\\ \vdots \\ u^\ast_{i+p+f-1}\end{bmatrix}  \quad
        \underline{y}^\ast_{i+p,f} = \begin{bmatrix}y^\ast_{i+p}\\\vdots\\y^\ast_{i+p+f-1}\end{bmatrix}, 
    \end{aligned}
    \end{equation}
    which satisfy the constraints of~\eqref{eqn:OCPing} for given $\underline{u}_{i,p}$ and $\underline{y}_{i,p}$. After applying the first optimal input $u^\ast_{i+p}$ and measuring the output $y_\mathrm{meas}$ the past data at time $i+p+1$ is updated to
    \begin{align}
        u_{\mathrm{ini}} &= \begin{bmatrix}u_{i+1}\\ \vdots \\u_{i+p-1}\\u^\ast_{i+p}\end{bmatrix} \quad y_{\mathrm{ini}} = \begin{bmatrix} y_{i+1}\\ \vdots \\y_{i+p-1}\\y_{\mathrm{meas}}\end{bmatrix}. 
    \end{align}
    Consider the shifted candidate trajectories
    \begin{equation}
    \begin{aligned}
        \underline{u}^\ast_{i+p+1,f} &= \begin{bmatrix}{u^\ast}^\top_{i+p+1}&\cdots&{u^\ast}^\top_{i+p+f-1}&u^\top_{\mathrm{app}}\end{bmatrix}^\top  \\
        \underline{y}^\ast_{i+p+1,f} &= \begin{bmatrix}{y^\ast}^\top_{i+p+1}&\cdots&{y^\ast}^\top_{i+p+f-1}&y^\top_{\mathrm{app}}\end{bmatrix}^\top, 
    \end{aligned}
    \end{equation}
    for any $u_{\mathrm{app}}\in\mathcal{U}$ and $y_{\mathrm{app}}\in\mathcal{Y}$. Both $u_\mathrm{app}$ and $y_{\mathrm{app}}$ exist since the constraint sets are nonempty by Assumption~\ref{ass:Constraints}. The constraints from the data-driven model at time $i+p+1$ become
    \begin{align}
    \label{eqn:appendedModel}
        \begin{bmatrix}U_p\\U_f\\Y_p\\Y_f\end{bmatrix}g &= \begin{bmatrix}u_{\mathrm{ini}}\\ \underline{u}^\ast_{i+p+1,f}\\y_{\mathrm{ini}}+\sigma\\\underline{y}^\ast_{i+p+1,f}\end{bmatrix}.
    \end{align}
    By Assumptions~\ref{ass:inputdata} and~\ref{ass:rankDataMatrix}, the matrix on the left-hand side of~\eqref{eqn:appendedModel} has full row rank. Hence, its columns span the entire space $\mathbb{R}^{L(n_u+n_y)}$, implying the existence of a vector $g$ satisfying~\eqref{eqn:appendedModel}. Therefore, a feasible candidate solution exists at time $i+p+1$, establishing recursive feasibility.\\
    Under Assumption~\ref{ass:Constraints}, all constraints in~\eqref{eqn:OCPing} are affine in $g$, and thus define a convex feasible set. Furthermore, $Q,R\succ0$ and $\Lambda\succ0$ imply that the objective function in \eqref{eqn:OCPingObjec} is strictly convex in $g$. Consequently, the optimization problem admits a unique minimizer (cf. \cite{bertsekas1999nonlinear}, Prop.~1.1.2).
\end{IEEEproof}
Lemma~\ref{lem:wellposedness} highlights a key advantage of weight-\gls{dpc}. Since weight-\gls{dpc} retains all columns of the data matrix and only modifies their contribution through the weighting matrix $\Lambda$, the full rank of the data matrix is preserved. In contrast, select-\gls{dpc} removes columns from the data matrix, which may reduce the rank required for feasibility and the informativity of the remaining data \cite{li2025datamodel}. Consequently, recursive feasibility and the existence of a solution are not guaranteed in general for select-\gls{dpc} unless additional mechanisms are employed to preserve the rank of the data matrix.
\subsection{Explicit solution of the column selector}
To characterize the effect of the weighted regularization on the utilization of the data matrix columns, we derive an explicit expression for the unique minimizer $g^\ast$. A related analysis for \gls{deepc} with $\ell_1$-norm regularization is presented in \cite{11312489}. We extend this type of analysis to the proposed weighted $\ell_2$-regularization.\\
Let the active inequality constraints of~\eqref{eqn:constraintsOCPing} be represented by the matrix $\tilde{A}$ and vector $\tilde{b}$ such that $\tilde{A}g^\ast=\tilde{b}$. Combining the equality constraint $U_pg=\underline{u}_{i,p}$ with the active inequality constraints, define $\hat{A} = \begin{bmatrix}U_p^\top&\tilde{A}^\top\end{bmatrix}^\top\in\mathbb{R}^{n_a\times n_g}$ and the vector $\hat{b} =\begin{bmatrix}\underline{u}_{i,p}^\top&\tilde{b}^\top\end{bmatrix}^\top \in\mathbb{R}^{n_a}$, where $n_a$ is the number of active constraints.\\
The objective function in~\eqref{eqn:OCPing} can be expressed as
\begin{align}
    f(g) &= g^\top H g + 2h^\top g + c, 
\end{align}
where
\begin{align}
    H &= U_f^\top \mathcal{R} U_f + Y_f^\top \mathcal{Q} Y_f + \Lambda + \lambda_\sigma Y_p^\top Y_p  \\
    h &= -Y_f^\top  \mathcal{Q} \underline{r}_{i+p,f} - \lambda_\sigma Y_p^\top\underline{y}_{i,p}  \\
    c &= \underline{r}_{i+p,f}^\top  \mathcal{Q} \underline{r}_{i+p,f} + \lambda_\sigma \underline{y}_{i,p}^\top \underline{y}_{i,p}, 
\end{align}
with $ \mathcal{Q} = I \otimes Q\in\mathbb{R}^{n_y\times f}$ and $ \mathcal{R} = I \otimes R\in\mathbb{R}^{n_u\times f}$. Introducing the Lagrange multiplier $\mu\in\mathbb{R}^{n_a}$ yields the Lagrangian
\begin{align}
    \mathcal{L}(g,\mu) &= g^\top H g + 2h^\top g + c + \mu^\top(\hat{A}g-\hat{b}). 
\end{align}
The stationarity and primal feasibility conditions are given by
\begin{align}
\label{eqn:KKTconditions1}
    2Hg^\ast + 2h + \hat{A}^\top\mu^\ast &= 0 \\
    \label{eqn:KKTconditions2}
    \hat{A}g^\ast - \hat{b} &= 0 .
\end{align}
\begin{assum}
\label{ass:LICQ}
    The matrix $\hat{A}\in\mathbb{R}^{n_a\times n_g}$ has full row rank, i.e., $\operatorname{rank}(\hat{A})=n_a$.
\end{assum}
Assumption~\ref{ass:LICQ} is known as the \gls{licq} \cite{bertsekas1999nonlinear} and guarantees that the inverse $(\hat{A}H^{-1}\hat{A}^\top)^{-1}$ exists. Eliminating the Lagrange multiplier $\mu^\ast$ in \eqref{eqn:KKTconditions1}-\eqref{eqn:KKTconditions2} yields the explicit solution
\begin{align}
\label{eqn:minimizerg}
    g^\ast &= -H^{-1}h +H^{-1}\hat{A}^\top
    (\hat{A}H^{-1}\hat{A}^\top)^{-1}
    (\hat{b}+\hat{A}H^{-1}h).
\end{align}
We obtain the following result for standard \gls{deepc} with $\ell_2$-norm regularization and for the proposed weight-\gls{dpc} algorithm.
\begin{prop}
\label{prop:deepc}
The minimizer in~\eqref{eqn:minimizerg} is a piecewise affine function of the past trajectory $\tau_{i,p}$ and is affine on each active set. The past inputs $\underline{u}_{i,p}$ influence the minimizer through $\hat{b}$, while the past outputs $\underline{y}_{i,p}$ influence it through $h$. Then:
\begin{itemize}
    \item For $\lambda_g>0$ and $\lambda_w>0$ the weight-\gls{dpc} formulation is obtained, where the weighting matrix $\Lambda$ is diagonal with non-uniform entries. Consequently, the regularization term induces a non-uniform penalty across the columns of the data matrix, thereby favoring data associated with operating regimes that are deemed more relevant according to the weights defined in~\eqref{eqn:weights}.
    \item For $\lambda_g>0$ and $\lambda_w=0$, the standard \gls{deepc} formulation with $\ell_2$-norm regularization is obtained. The weighting matrix satisfies $\Lambda=\lambda_g I$. Consequently, the regularization term penalizes all columns of the data matrix uniformly and does not encode any preference.
\end{itemize} 
\end{prop}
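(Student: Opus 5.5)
The plan is to read the claims directly off the explicit formula~\eqref{eqn:minimizerg} and the definitions of $H$, $h$ and $\hat b$. First I fix an active set, i.e.\ a fixed matrix $\tilde A$ with fixed rows, so that $\hat A$ is fixed. I also treat $\Lambda$ as given; it depends on $\tau_{i,p}$ through~\eqref{eqn:weights}, so the affine statement should be read for a fixed weighting matrix, and in the DeePC case $\Lambda=\lambda_g I$ is genuinely constant. The matrix $H = U_f^\top\mathcal R U_f + Y_f^\top\mathcal Q Y_f + \Lambda + \lambda_\sigma Y_p^\top Y_p$ contains only data and weights and no past trajectory. It is positive definite because $\Lambda\succ0$ and the remaining terms are positive semidefinite, so $H^{-1}$ exists. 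Under Assumption~\ref{ass:LICQ}, the matrix $\hat A H^{-1}\hat A^\top$ is also invertible.

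Next I expand~\eqref{eqn:minimizerg} as
\begin{align*}
g^\ast = \big(H^{-1}\hat A^\top(\hat AH^{-1}\hat A^\top)^{-1}\hat A - I\big)H^{-1}h + H^{-1}\hat A^\top(\hat AH^{-1}\hat A^\top)^{-1}\hat b .
\end{align*}
The vector $h = -Y_f^\top\mathcal Q\underline r_{i+p,f} - \lambda_\sigma Y_p^\top \underline y_{i,p}$ is affine in $\underline y_{i,p}$ and contains no $\underline u_{i,p}$. The vector $\hat b = [\underline u_{i,p}^\top\ \tilde b^\top]^\top$ is affine in $\underline u_{i,p}$, since $\tilde b$ collects constants from $b_u,b_y$. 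Hence $g^\ast$ is affine in $\tau_{i,p}$ on that active set, with $\underline u_{i,p}$ entering only via $\hat b$ and $\underline y_{i,p}$ only via $h$.

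For the piecewise statement I use Lemma~\ref{lem:wellposedness}: the minimizer is unique for every $\tau_{i,p}$. The set of parameters sharing a given active set is characterized by primal feasibility $Ag^\ast\le b$ on the inactive rows together with $\mu^\ast\ge 0$. Since both $g^\ast$ and $\mu^\ast$ are affine in the parameter on that set (because $\mu^\ast$ is obtained from~\eqref{eqn:KKTconditions1} by the same elimination), these conditions define a polyhedral region. There are finitely many active sets, so the regions are finitely many polyhedra covering the feasible parameter set, and this yields the piecewise affine claim. This is the standard multiparametric-QP argument. I expect it to be the main subtle point, because the KKT multiplier sign conditions must be stated for the inequality rows only, and LICQ is needed on each region.

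Finally, the two bullets follow by substituting into~\eqref{eqn:weight_matrix}. If $\lambda_w=0$, then $\Lambda=\lambda_g I$ and $\|g\|_\Lambda^2=\lambda_g\|g\|^2$, which recovers~\eqref{eqn:dpc} with an isotropic penalty. If $\lambda_w>0$, the diagonal entries $\lambda_w\tilde w_v+\lambda_g$ differ whenever the distances $w_v$ differ. In that case the cost of using column $v$ increases with its distance to $\tau_{i,p}$, so the same prediction is cheaper when it is built from nearby columns; this is what the bullet calls a preference for locally relevant data.
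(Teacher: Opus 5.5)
Your proof is correct, but it takes a different route from the paper's. The paper does not argue the piecewise-affine claim through the KKT or multiparametric-QP structure. Instead it passes to an idealized regime: $\lambda_\sigma\to\infty$, so that $Z_Pg=\tau_{i,p}$ with $Z_P=[U_p^\top\ Y_p^\top]^\top$ becomes a hard constraint; $\mathcal{Q},\mathcal{R}$ are negligible relative to $\Lambda$; and no inequality constraints are active. In that regime the minimizer is the weighted minimum-norm solution $g^\ast=\Lambda^{-1}Z_P^\top(Z_P\Lambda^{-1}Z_P^\top)^{-1}\tau_{i,p}$, which reduces to the uniform solution $Z_P^\top(Z_PZ_P^\top)^{-1}\tau_{i,p}$ when $\lambda_w=0$.

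Your route actually justifies the first sentence of the statement in the general setting (finite $\lambda_\sigma$, nonzero $Q,R$, active constraints). It establishes affinity on each active set, the channels $\hat b$ and $h$ through which $\underline{u}_{i,p}$ and $\underline{y}_{i,p}$ enter, and the finite polyhedral partition. The paper's limiting computation establishes none of these.

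Your caveat that $\Lambda$ must be held fixed is also needed, and the paper passes over it. For $\lambda_w>0$ the normalized distances $\tilde w_v$ depend nonlinearly on $\tau_{i,p}$, so the map $\tau_{i,p}\mapsto g^\ast$ is not piecewise affine as literally stated.

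The paper's route, in turn, gives more concrete support for the first bullet. Your cost comparison only shows that nearby columns are cheaper to use. The weighted minimum-norm formula gives $g^\ast_v = z_v^\top\nu/(\lambda_w\tilde w_v+\lambda_g)$, where $z_v$ is the column of $Z_P$ indexed by $v$ and $\nu=(Z_P\Lambda^{-1}Z_P^\top)^{-1}\tau_{i,p}$. That is an explicit column-wise down-scaling by $\Lambda^{-1}$. You could add this special case to your last paragraph. Keep in mind that even there $|g^\ast_v|$ need not be monotone in $w_v$, since the factor $z_v^\top\nu$ also varies with $v$.
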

\begin{IEEEproof}
    Take the hard-initialization limit $\lambda_\sigma\to\infty$, so the past output becomes a hard constraint, recovering $Y_p g= \underline{y}_{i,p}$. Let the weight matrices $\mathcal{Q}$ and $\mathcal{R}$ be negligible relative to $\Lambda$. Furthermore, suppose that there are no active constraints. Then, the minimizer of $\|g\|_\Lambda^2$ subject to $Z_P g=\tau_{i,p}$, with $Z_P:=[U_p^\top\ Y_p^\top]^\top$, is the \emph{weighted} minimum-norm solution
    \begin{align}
    \label{eqn:limitgsolution}
        g^\ast &= \Lambda^{-1}Z_P^\top(Z_P\Lambda^{-1}Z_P^\top)^{-1}\tau_{i,p},
    \end{align}
    which assigns different weights to each column of the data matrix depending on their distance to the current operation point according to~\eqref{eqn:weight_matrix}. For standard \gls{deepc} with $\lambda_w=0$, the minimum norm solution simplifies to
    \begin{align}
        g^\ast &= Z_P^\top(Z_PZ_P^\top)^{-1}\tau_{i,p},
    \end{align}
    where the columns are assigned with uniform weights.
\end{IEEEproof}
Similar to the observations of \cite{11312489}, the limitation of standard \gls{deepc} is not the choice of norm itself but rather the uniform weighting applied to all columns through the regularization term.

\section{Numerical Example}
\label{sec:numericalEx}
The proposed weight-\gls{dpc} framework is evaluated on a nonlinear coupled-tank system \cite{Liquidleveltracking}, whose continuous-time dynamics are given by
\begin{equation}
\begin{aligned}
\label{eqn:WaterTanksys}
    \dot{x}_1(t) &= -0.904\sqrt{x_1(t)} + 0.258 u(t)\\
    \dot{x}_2(t) &= 0.904\sqrt{x_1(t)} - 0.508\sqrt{x_2(t)}\\
    y(t) &= x_2(t).
\end{aligned}
\end{equation}
The state vector is defined as $x(t) = \begin{bmatrix}x_1(t)&x_2(t)\end{bmatrix}^\top$, where $x_1$ and $x_2$ denote the liquid levels in the upper and lower tanks, respectively.
\begin{figure}
    \centering
    \includegraphics[width=\linewidth]{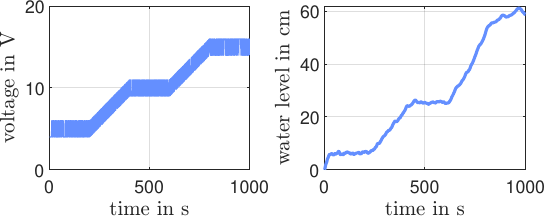}
    \caption{Collected input-output data of the 2-tank system}
    \label{fig:inputoutputdata}
\end{figure}
The continuous-time dynamics are simulated using a fixed integration step of $0.01$~s, while the controller operates with a sampling period of $1$~s. To construct the Hankel matrices, a \gls{pe} input sequence consisting of $1000$ samples is generated. The input signal, shown in the left plot in Figure~\ref{fig:inputoutputdata}, is obtained by superimposing a pseudo-binary random sequence on piecewise-constant operating levels of $5$, $10$, and $15$~V, which are connected through linear transitions. This design ensures sufficient excitation while spanning multiple operating regions of the nonlinear system, making it well suited for analyzing how the proposed weighted regularization influences data selection. The control objective is to minimize the cost function defined in~\eqref{eqn:Objective}, with the performance matrices $Q=1$ and $R=10^{-5}$. The output reference trajectory is shown in gray in Figure~\ref{fig:Output_stagecost}.
\begin{figure}
    \centering
    \includegraphics[width=\linewidth]{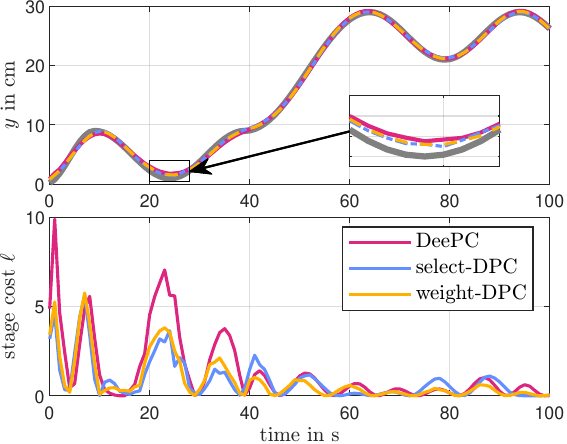}
    \caption{Output and stage cost for \gls{deepc}, select-\gls{dpc}, and weight-\gls{dpc}}
    \label{fig:Output_stagecost}
\end{figure}
The corresponding closed-loop output trajectories obtained with \gls{deepc}, select-\gls{dpc}, and weight-\gls{dpc} are shown in red, blue, and yellow, respectively.\\
For each controller, the hyperparameters are tuned through a grid search. The parameter values yielding the lowest cumulative stage cost, $\sum_k \ell(u_k,y_k)$, are reported in Table~\ref{tab:parameters}. For select-\gls{dpc}, $n_g$ reduces to the number of columns retained in the data matrix after the selection procedure.
\begin{table}
    \caption{Optimal controller parameters}
    \centering
    \begin{tabular}{c|c|c|c|c|c|c}
         & $p$ & $f$ & $\lambda_g$ & $\lambda_\sigma$ & $\lambda_w$ & $n_g$\\
         \hline
         \gls{deepc}& $5$ & $19$ & $60$ & $10^4$ & - & $1000$\\
         select-\gls{dpc}& $5$ & $17$ & $40$ & $10^5$ & - &400 \\
         weight-\gls{dpc} & $5$ & $15$ & $0.9$ & $10^4$ & $1300$ & $1000$
    \end{tabular}
    \label{tab:parameters}
\end{table}
The resulting cumulative stage costs at the optimal parameter settings are $132.7$, $85.8$, and $84.9$ for \gls{deepc}, select-\gls{dpc}, and weight-\gls{dpc}, respectively. These results demonstrate a performance improvement of both data-selection approaches over standard \gls{deepc}. Furthermore, weight-\gls{dpc} achieves the lowest overall cost, indicating that weighting the data according to operating point relevance can be at least as effective as explicit data selection while preserving the full data matrix. For the first $40$~s, the system is operated with a water level in the region of the first operation point of the data collection in Figure~\ref{fig:inputoutputdata}. The stage cost of select-\gls{dpc} and weight-\gls{dpc} are lower than for standard \gls{deepc}. After 40~s, the reference output transitions into the region of the second operating point. In this regime, weight-\gls{dpc} achieves the lowest stage cost. As the system exhibits stronger nonlinear behavior at low water levels, all three controllers attain a lower stage cost at the higher operating point.\\
Figure~\ref{fig:distance_heatmap} shows the absolute value of each entry of the column selector $g$ over time for the simulation presented in Figure~\ref{fig:Output_stagecost} in blue.
\begin{figure}
\centering
\begin{minipage}{0.74\linewidth}
    \includegraphics[width=\linewidth]{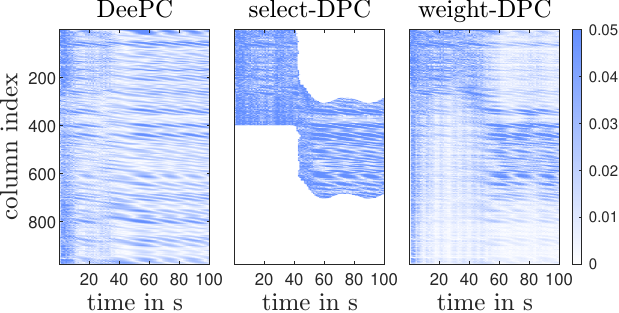}
\end{minipage}\hfill
\begin{minipage}{0.23\linewidth}
    \includegraphics[width=\linewidth]{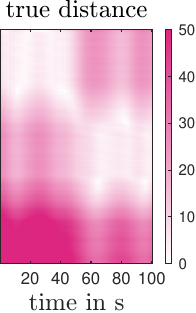}
\end{minipage}
\caption{Absolute value of the entries of $g$ (blue). True distance between the state $x_{i+p}$ and the state $x_{v+p}$ associated with each data column (red).}
\label{fig:distance_heatmap}
\end{figure}
Darker shades indicate larger values of $|g_v|$, implying that the $v^{th}$ column of the data matrix contributes more to the prediction, whereas lighter shades indicate a smaller contribution.\\
For standard \gls{deepc}, the heatmap reveals that all columns are utilized to a similar extent throughout the simulation. In contrast, select-\gls{dpc} initially relies solely on the first $400$ columns, since the first $400$ samples in Figure~\ref{fig:inputoutputdata} correspond to operating conditions that are closest to those encountered during the reference tracking task, whereas columns with indices exceeding $400$ originate from regions with higher water levels and larger input voltages. Consequently, these columns are excluded by the selection procedure. After approximately $40$~s, the operating region shifts, causing select-\gls{dpc} to utilize columns with indices roughly between $300$ and $700$.\\
The heatmap for weight-\gls{dpc} shows a markedly different behavior. Rather than discarding columns, all data columns remain available for prediction. However, columns associated with operating points that are close to the current system state receive substantially larger weights and therefore contribute more strongly to the prediction.\\
To further examine the influence of the operating point, the true distance between the state at the time of prediction of weight-\gls{dpc} and the state $x_{v+p}$ associated with each column of the data matrix is shown in red in Figure~\ref{fig:distance_heatmap}. Lighter colors indicate smaller distances and thus columns that are expected to be more relevant for prediction. The distance patterns obtained for standard \gls{deepc} and select-\gls{dpc} are nearly identical to that in Figure~\ref{fig:distance_heatmap}. A clear correspondence between state proximity in red and data utilization in blue can be observed for both select-\gls{dpc} and weight-\gls{dpc}, which explains the superior performance compared to standard \gls{deepc}.\\
While select-\gls{dpc} performs a binary selection by either retaining or discarding a column and subsequently treats all retained columns equally, weight-\gls{dpc} establishes a continuous preference over the available data by assigning weights according to the proximity of each column to the current operating point. As a result, the method exploits the entire dataset while still emphasizing the most relevant data.
\section{Conclusion}
\label{sec:Conclusion}
We propose a novel \gls{dpc} framework for nonlinear systems by incorporating a weighted norm regularization term. We explicitly derive how the weighted regularization influences data usage and compare the proposed approach with both standard \gls{deepc} and select-\gls{dpc}. Numerical simulations demonstrate that the proposed method outperforms standard \gls{deepc} and achieves similar performance to select-\gls{dpc}. In contrast to select-\gls{dpc}, however, the proposed approach preserves the full data matrix and therefore guarantees that the required rank condition remains satisfied at all times.\\
Future research could explore alternative metrics for measuring the proximity of operating points based on recorded input-output data, as well as projection-based weighted regularization.

\bibliographystyle{IEEEtran}
\bibliography{literature}
\end{document}